\theoremstyle{definition}
\newtheorem{definition}{Definition}
\newtheorem{theorem}{Theorem}
\DeclareMathOperator\supp{supp}
\title[Perfect Secrecy in the Wild]{
    Perfect Secrecy in the Wild \\[3pt]
    \small{
        A Characterization
    }
}
\author{
    Costas Cavounidis
    \qquad Massimiliano Furlan
    \qquad Alkis Georgiadis‑Harris \\ \\  
    University of Warwick
}
\date{\today}
\begin{document}

\begin{abstract}
Alice wishes to reveal the state $X$ to Bob, if he knows some other information $Y$ also known to her. If Bob does not, she wishes to reveal nothing about $X$ at all. When can Alice accomplish this? We provide a simple necessary and sufficient condition on the joint distribution of $X$ and $Y$. Shannon's result on the perfect secrecy of the one-time pad follows as a special case.
\end{abstract}

\thanks{We are grateful to Fran\c{c}oise Forges,  Carlo Perroni
and Debraj Ray for useful conversations. Nevertheless, we absolve them of responsibility for any errors and omissions in this paper.}

\maketitle

\begin{epigraphs}
    \qitem{
        Of all the Knights in Appledore\\
        The wisest was Sir Thomas Tom.\\
        He multiplied as far as four,\\
        And knew what nine was taken from\\
        To make eleven. He could write\\
        A letter to another Knight.
    }%
    {A. A. Milne}
\end{epigraphs}

\section{Introduction}
We study a setting in which an agent---Alice---sends a public signal $Z$ to another---Bob. Alice wishes to inform Bob of the state $X$ if he knows the value of the private signal $Y$, also known to Alice.\footnote{Throughout, we assume that $X$ and $Y$ are finitely supported.} If he does not, she wishes that her signal gives no information about $X$ at all---a requirement \cite{shannon1949communication} terms \textit{perfect secrecy}. Equivalent to this is the case where Bob is known to know $Y$ and must surely learn $X$, but Alice's signal will be overheard by that dastardly eavesdropper, Eve, who should learn nothing about $X$.\footnote{Readers familiar with cryptographic terms may be inclined to call $x$ the ``plaintext'' or ``message'', $y$ the ``key'', and $z$ the ``ciphertext''.}

We find that Alice's problem has a solution \textit{if and only if} the matrix of conditional probabilities with elements $P_{Y\mid X}(y_j|x_i)$ is column-substochastic. Under this condition, we construct a plausible joint distribution of the state, private signal, and a public signal $Z$, such that observing any pair $(y,z)$ determines $x$, but conditioning on any $z$ alone says nothing about $X$. Conversely, when our condition fails, Alice's task is impossible.

The intuition for the necessity of our condition is as follows. Public signal realizations that reveal $x$ when combined with private signal $y$ must occur with zero probability given $(x^\prime,y)$ for every $x^\prime\neq x$. To maintain perfect secrecy, each signal realization taken on its own must maintain the prior marginal over $X$. Thus, signals revealing $x$ when combined with $y$ must have total probability at least $P_{XY}(x,y)/P_X(x)$. Summing over $x \in \supp X$ delivers our condition. 

The sufficiency of our condition is demonstrated by construction. We decompose the matrix of conditional probabilities into a convex combination of truncated permutation matrices using the Birkhoff-von Neumann Theorem. We then show that, reweighted by the prior marginal on $X$, the decomposition can be interpreted as a joint distribution on $X,Y$ and a public signal $Z$ satisfying our desiderata.

\cite{shannon1949communication} demonstrates that if the private signal is distributed uniformly and independently of the state, Alice's problem is solvable if and only if $\#\supp X \leq \# \supp Y$.\footnote{This is one of several celebrated results known as Shannon's One-Time Pad Theorem or Shannon's Perfect Secrecy Theorem.} Independence and uniformity of the private signal imply that each column of the conditional matrix sums to $(\#\supp X) /(\#\supp Y )$; thus our condition reduces, in this case, to Shannon's.

Related work examines what can be said with plausible deniability \citep{antic2025subversive}, and what can be transmitted while obfuscating other facts \citep{strack2024privacy}. Recent applied work studies screening data buyers with different information offerings \citep{zhao2024tailoring}.

Private communication is a fundamental building block of mechanism design, Bayes-correlated equilibrium, and a variety of related tools.\footnote{Conversely, public communication is a common way to introduce coordination by way of jointly controlled lotteries \citep{aumann1995repeated}. Our work shows that public signals can easily be repurposed for private communication, complicating standard results.} Understanding the foundations of such communication clarifies---and in our view expands---the domain of application of such methods.

\section{The Theorem}
We are given two finitely-supported random variables $X$ and $Y$, with joint distribution $P_{XY}$. Alice's problem is to produce a random variable $Z$ and a joint distribution $Q_{XYZ}$ satisfying three desiderata:

\begin{definition}
We say $Q_{XYZ}$ satisfies \textbf{consistency} if $Q_{XY}=P_{XY}$.
\end{definition}

Consistency requires that the joint distribution $Q_{XYZ}$ cohere with the given prior. This is necessary to interpret $Z$ as a public signal which Bob and Eve will use to update their beliefs. 

\begin{definition}
We say $Q_{XYZ}$ satisfies \textbf{informativeness} if $Q_{X \mid YZ}(\, \cdot \mid y,z)$ is degenerate whenever $Q_{YZ}(y,z)>0$.
\end{definition}

Informativeness requires that Bob is able to deduce $X$ from his observations of $Y$ and $Z$. 

\begin{definition}
We say $Q_{XYZ}$ satisfies \textbf{perfect secrecy} if $Q_{X \mid Z}=P_X$.
\end{definition}

Perfect secrecy requires that $X$ and $Z$ are independent, so that the public signal leaks no information to Eve. Since Shannon, it is the first-best standard for secure communication.

The statement of our theorem follows. We should point out that clever measure theory cannot improve on this---the restriction on $Z$ being finitely supported is merely a matter of expedience.

\begin{theorem}\label{thm:main}
There exists a random variable $Z$ and a finitely-supported joint distribution \(Q_{XYZ}\) satisfying consistency, informativeness, and perfect secrecy if and only if 
\begin{equation*}\label{eq:column_condition}
    \sum_{x \in \supp X} P_{Y \mid  X}(y \mid x) \leq 1
    \quad 
    \text{ for all } 
    y \in \supp Y.
\end{equation*}
\end{theorem}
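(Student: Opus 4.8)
The plan is to prove both directions separately, treating necessity and sufficiency as outlined in the introduction.

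For \textbf{necessity}, I would argue contrapositively from the three desiderata. Fix any $y \in \supp Y$. Informativeness tells me that each realization $z$ with $Q_{YZ}(y,z)>0$ pins down a unique value $x(y,z)$, so I can partition the support of $Z$ (relative to $y$) according to which state each $z$ reveals. For a fixed $x \in \supp X$, let $S_{x,y}$ be the set of signals $z$ that reveal $x$ when paired with $y$. The crucial observation is that such a $z$ must satisfy $Q_{X \mid YZ}(x \mid y, z) = 1$, hence $Q_{X' Y Z}(x', y, z) = 0$ for all $x' \neq x$. Now I would use perfect secrecy: since $Q_{X \mid Z}(x \mid z) = P_X(x)$ for every $z$, summing the marginal identity $Q_{XZ}(x,z) = P_X(x)\, Q_Z(z)$ over $z \in S_{x,y}$ gives $Q_{XZ}(x, S_{x,y}) = P_X(x)\, Q_Z(S_{x,y})$. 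Because $z \in S_{x,y}$ forces all probability mass on state $x$, the joint $Q_{XYZ}(x,y,S_{x,y})$ equals $Q_{YZ}(y, S_{x,y})$, which in turn is at least $P_X(x)^{-1} Q_{XY}(x,y) = P_X(x)^{-1} P_{XY}(x,y)$ after using consistency. Dividing by $Q_Z$ and summing over $x$, the sets $S_{x,y}$ are disjoint so their total $Z$-probability is at most one, yielding $\sum_x P_{XY}(x,y)/P_X(x) = \sum_x P_{Y \mid X}(y \mid x) \leq 1$. I would need to be careful bookkeeping with conditional-versus-joint probabilities and the disjointness of the $S_{x,y}$.

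For \textbf{sufficiency}, I would follow the construction sketched in the introduction. Given the column-substochastic matrix $M$ with entries $M_{xy} = P_{Y \mid X}(y \mid x)$ (rows indexed by $x$, columns by $y$), I want to pad it into a doubly stochastic matrix so the Birkhoff–von Neumann theorem applies. Each row already sums to one (it is a conditional distribution over $y$), and each column sums to at most one by hypothesis; after adding slack rows and columns I can obtain a genuine bistochastic matrix and decompose it as $\sum_k \lambda_k \Pi_k$, a convex combination of permutation matrices. Restricting each $\Pi_k$ to the original $X$-rows and $Y$-columns yields truncated permutation matrices—partial matchings—and I would index the signal realizations $z$ by these components $k$. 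The joint law is then $Q_{XYZ}(x,y,k) = P_X(x)\, \lambda_k\, [\Pi_k]_{xy}$, where the indicator picks out the matched pair.

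The main obstacle, and the step I expect to require the most care, is verifying that this construction simultaneously satisfies all three desiderata, and in particular confirming that the padding does not corrupt the marginals. Consistency follows by summing over $k$ and recovering $\sum_k \lambda_k [\Pi_k]_{xy} = M_{xy}$, so $Q_{XY}(x,y) = P_X(x) P_{Y \mid X}(y \mid x) = P_{XY}(x,y)$. Informativeness holds because, within component $k$, the permutation $\Pi_k$ matches each $y$ to at most one $x$, so $(y,k)$ determines $x$. Perfect secrecy is the delicate one: I must compute $Q_{X \mid Z}(x \mid k) = Q_{XZ}(x,k)/Q_Z(k)$ and check it equals $P_X(x)$. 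Here $Q_{XZ}(x,k) = P_X(x)\, \lambda_k \sum_y [\Pi_k]_{xy}$, and the interior sum equals $1$ only if row $x$ is matched in $\Pi_k$—which is precisely where the truncation and the padding interact. I anticipate that the slack rows must be engineered so that every original $x$-row is matched in every selected permutation $\Pi_k$ (or else the conditioning is taken only over the relevant signals), and making this watertight is where the real work of the proof lies.
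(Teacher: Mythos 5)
Your proposal takes exactly the paper's route in both directions, but the two places you yourself flag as needing care are where it currently falls short, so let me close them. In necessity, one link in your chain is mis-stated: $Q_{YZ}(y,S_{x,y})$ is not ``at least $P_X(x)^{-1}Q_{XY}(x,y)$''---in fact it equals $Q_{XY}(x,y)$ exactly, because informativeness puts every $z$ with $Q_{XYZ}(x,y,z)>0$ into $S_{x,y}$ and kills all other states there; since $P_X(x)\le 1$, the inequality you wrote points the wrong way. The inequality you actually need concerns $Q_Z$, not $Q_{YZ}$: from $P_X(x)\,Q_Z(S_{x,y}) = Q_{XZ}(x,S_{x,y}) \geq Q_{XYZ}(x,y,S_{x,y}) = P_{XY}(x,y)$ (perfect secrecy, then monotonicity of measures, then consistency) you get $Q_Z(S_{x,y}) \geq P_{XY}(x,y)/P_X(x)$, and disjointness of the sets $S_{x,y}$ across $x$ then gives $\sum_{x} P_{Y \mid X}(y \mid x) \leq Q_Z\bigl(\bigcup_x S_{x,y}\bigr) \leq 1$. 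With that repair your necessity argument is precisely the paper's (its $\phi(x,y)$ is your $S_{x,y}$).

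In sufficiency, the ``real work'' you anticipate---engineering the padding so that every original row is matched in every selected permutation---dissolves once you notice that no slack \emph{columns} are needed at all. Each row of $\mathbf{P}_{Y\mid X}$ already sums to exactly one, and summing the hypothesis over columns gives $n \leq m$; so, as the paper does, pad with $m-n$ slack rows only (e.g.\ $m-n$ copies of $\bigl(\mathbf{1}_m^\top - \mathbf{1}_n^\top \mathbf{P}_{Y\mid X}\bigr)/(m-n)$), producing an $m \times m$ doubly stochastic matrix in which the original rows appear \emph{unaltered as full rows}. Every permutation matrix in the Birkhoff--von Neumann decomposition then has exactly one unit entry in each original row, necessarily among the original columns, so $\sum_{j}[\Pi_k]_{xj} = 1$ for every original $x$ and every $k$; this simultaneously yields $Q_{XZ}(x,z_k) = \lambda_k P_X(x)$ and $Q_Z(z_k) = \lambda_k$, i.e.\ perfect secrecy with no conditioning caveats. (Even your rows-and-columns variant would survive: since the original rows already sum to one, any slack-column entries in those rows are forced to zero, and Birkhoff permutations are supported where the matrix is positive, so no original row can be matched to a slack column---but the row-only padding makes the issue invisible.) With these two fixes, your proposal coincides with the paper's proof.
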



\begin{proof}

{\bf(Necessity)}. Suppose there exists a random variable \(Z\) and a finitely-supported joint distribution \(Q_{XYZ}\) that satisfies 
     \(Q_{XY} = P_{XY}\) (consistency),
     \(Q_{X\mid YZ}\) degenerate (informativeness), and
     \(Q_{X\mid Z}=P_{X}\) (perfect secrecy).
Let 
\(
    \phi(x, y) = \{z \in\supp Z : Q_{XYZ}(x,y,z) > 0\}
\); 
it is the set of public signals that are sent with positive probability when \(X=x\) and \(Y=y\). Informativeness implies that for any fixed \((y,z) \in \supp Y \times \supp Z\), if \(Q_{XYZ}(x,y,z) > 0\) for some \(x \in \supp X\), then \(Q_{XYZ}(x^{\prime},y,z) = 0\) for all \(x^{\prime} \neq x\). Hence, for any two different \(x,x^{\prime} \in \supp X\), and any $y \in \supp Y$, the sets $\phi(x,y)$ and $\phi(x^{\prime},y)$ are disjoint; \(\phi(x,y) \cap \phi(x^{\prime},y) = \varnothing\).

Note that for any \((x,y,z) \in \supp X \times \supp Y \times \supp Z\),
\begin{equation}\label{eqn:star}
    Q_{XYZ}(x,y,z) \leq \sum_{y^{\prime} \in \supp Y} Q_{XYZ}(x,y^{\prime},z) = Q_{XZ}(x,z).
\end{equation}

For any fixed \(y\in \supp Y\), we have
\[
\begin{aligned}
    \sum_{x\in\supp X}\frac{P_{XY}(x,y)}{P_{X}(x)}
        &= \sum_{x\in\supp X}  \frac{1}{P_{X}(x)} \sum_{z \in\supp Z}Q_{XYZ}(x,y,z)
        && \text{(consistency)}\\[5pt]
        &= \sum_{x\in\supp X} \frac{1}{P_{X}(x)} \sum_{z\in\phi(x,y)}
           Q_{XYZ}(x,y,z)
        && \text{(definition of }\phi)\\[5pt]
        &\leq \sum_{x\in\supp X} \frac{1}{P_{X}(x)} \sum_{z\in\phi(x,y)}
           Q_{XZ}(x,z)
        && (Q_{XYZ}\le Q_{XZ} \text{ from } \eqref{eqn:star})\\[5pt]
        &= \sum_{x\in\supp X} \frac{1}{P_{X}(x)} \sum_{z\in\phi(x,y)} P_{X}(x) Q_{Z}(z)
        && \text{(perfect secrecy)}\\[5pt]
        &= \sum_{x\in\supp X} \sum_{z\in\phi(x,y)} Q_{Z}(z)\leq 1 
        && ( \phi(x,y) \cap \phi(x^{\prime},y)= \varnothing ) . 
\end{aligned}
\] 

%
{\bf(Sufficiency)}. We may enumerate \(\supp X = \{x_1, \ldots, x_n\}\) and \(\supp Y = \{y_1,\ldots, y_m\}\).
Suppose, as postulated, that for all \(j \in \{1,\ldots,m\}\), \(\sum_{i=1}^n P_{Y \mid  X}(y_j \mid x_i) \leq 1\). 
Let \(\mathbf{P}_{Y \mid X}\) be the \(n \times m\) matrix of conditionals whose $ij$-th entry is \(P_{Y \mid  X}(y_j \mid x_i)\). 
The condition implies \(\mathbf{P}_{Y \mid X}\) is column-substochastic, with 
\begin{equation*}
   n = \sum_{i=1}^n \sum_{j=1}^m  P_{Y \mid  X}(y_j \mid x_i)  = \sum_{j=1}^m \sum_{i=1}^n P_{Y \mid  X}(y_j \mid x_i) \, \leq \,
   \sum_{j = 1}^m 1 = m .
\end{equation*}
We will now extend $\mathbf{P}_{Y \mid X}$ to a doubly-stochastic matrix. If \(n = m\), \(\mathbf{P}_{Y \mid X}\) is already doubly-stochastic. In that case let \(\mathbf{P}^{\mathsf{ext}}_{Y \mid X} := \mathbf{P}_{Y \mid X}\). If \(n \leq m\), let \(\mathbf{P}^{\mathsf{ext}}_{Y \mid X}\) be the \(m \times m\) doubly stochastic matrix obtained by extending \(\mathbf{P}_{Y \mid X}\) with \(m-n\) new identical rows,
\begin{equation*}
    \mathbf{r}^\top := \dfrac{\mathbf{1}_m^\top - \mathbf{1}_n^\top \mathbf{P}_{Y \mid X}}{m - n}  \geq 0
\end{equation*}
\begin{equation*}
    \mathbf{P}^{\mathsf{ext}}_{Y \mid X} := 
    \begin{pmatrix}
        \mathbf{P}_{Y \mid X} \\[5pt]
        \mathbf{1}_{(m-n)} \mathbf{r}^\top
    \end{pmatrix},
\end{equation*}
where \(\mathbf{1}_m\) (resp. \(\mathbf{1}_{(m-n)}\)) is the all-ones vector of dimension \(m\) (resp. \(m-n\)).

By the Birkhoff-von Neumann Theorem,\footnote{See \cite{birkhoff1946decomposition} and
\cite{vonneumann1953decomposition}.} \( \mathbf{P}^{\mathsf{ext}}_{Y \mid X} \) can be expressed as a convex combination of \(m \times m\) permutation matrices,
\begin{equation}\label{convex_combo}
    \mathbf{P}^{\mathsf{ext}}_{Y \mid X} = \sum_{k=1}^p \alpha_k 
    \mathbf{\Pi}^{(k)},
\end{equation}
where \(\alpha_k >0\), \(\sum_{k=1}^p \alpha_k = 1\), and each \(\mathbf{\Pi}^{(k)}\) is an \(m \times m\) permutation matrix.
Define a finitely-supported random variable \(Z\) by identifying its  realizations $\{z_1,\dots,z_p\}$ with the permutation matrices in (\ref{convex_combo}); and construct the joint distribution of all three variables as so:
\begin{equation}\label{eqn:construction}
    Q_{XYZ}(x_i, y_j, z_k) = \alpha_k P_X(x_i) \mathbf{\Pi}^{(k)}_{ij}.
\end{equation}
This is indeed a distribution since, 
\[
    \sum_{k=1}^p \alpha_k \sum_{i=1}^n P_X(x_i) \sum_{j=1}^m \mathbf{\Pi}^{(k)}_{ij} = \sum_{i=1}^n \sum_{j=1}^m P_X(x_i)P_{Y \mid X}(y_j \mid x_i) =1,
\]
and has marginal \(Q_Z(z_k) = \alpha_k\).
Moreover, it satisfies, consistency
    \[
        \begin{aligned}
        \sum_{k=1}^p Q_{XYZ}(x_i, y_j, z_k) 
        &=  P_X(x_i) \sum_{k=1}^p \alpha_k \mathbf{\Pi}^{(k)}_{ij}
        \quad && \text{(construction \eqref{eqn:construction})}\\[5pt]
        &= P_X(x_i) [\mathbf{P}_{Y \mid X}]_{ij}
        && \text{(decomposition of \(\mathbf{P}_{Y \mid X}\))}\\[5pt]
        &= P_X(x_i) P_{Y \mid X}(y_j \mid x_i)
        && \text{(definition of \(\mathbf{P}_{Y \mid X}\))}\\[5pt]
        &= P_{XY}(x_i,y_j) ; \\[5pt]
         \end{aligned}
    \]
perfect secrecy,
    \[
        \begin{aligned}
         Q_{XZ}(x_i,z_k)
         &= \sum_{j=1}^m Q_{XYZ}(x_i, y_j, z_k) \\[5pt]
         &=   \alpha_k P_X(x_i)  \sum_{j=1}^m \mathbf{\Pi}^{(k)}_{ij}
         \quad && \text{(construction \eqref{eqn:construction})}\\[5pt]
         &=   \alpha_k P_X(x_i)
         && \text{(\(\mathbf{\Pi}^{(k)}_{ij}=1 \Rightarrow \mathbf{\Pi}^{(k)}_{i^{\prime}j}=0\))}\\[5pt]
         &=   Q_Z(z_k) P_X(x_i)
         && \text{(\(Q_Z(z_k) = \alpha_k\))} ;
         \end{aligned}
    \]
and informativeness:
    for any \((y_j,z_k)\) pair, if \( Q_{XYZ}(x_i, y_j, z_k) > 0 \) then \(\mathbf{\Pi}^{(k)}_{ij}= 1\) implying \(\mathbf{\Pi}^{(k)}_{i^{\prime}j}= 0\) for any \(i^\prime \neq i\) (as $\mathbf{\Pi}^{(k)}$ is a permutation matrix) and therefore \( Q_{XYZ}(x_{i^{\prime}}, y_j, z_k) = 0 \). Hence, \(Q_{X \mid YZ}\) is degenerate. \end{proof}


Under Shannon's assumptions, our condition collapses to his: when $X\perp Y$ and $Y$ is uniform, all entries of the conditional matrix are identical. Thus, given it is row-stochastic, column-substochasticity means there are (weakly) more columns than rows; or equivalently, the support of $Y$ is more numerous than that of $X$. This recovers Shannon's condition. But his result goes further in one way---it shows that in his case, the mapping from pairs $(x,y)$ to public signals $z$ can be chosen to be deterministic. A natural question, then, is when this is possible in general.

In our set-up, to be able to choose $Q_{Z \mid XY}$ to be degenerate, it must be that $Q_{XYZ}(x,y,z)>0$ implies that $Q_{XYZ}(x,y,z^\prime)=0$ for all $z^\prime\neq 0$. Thus, the values of $P_{Y \mid X}$ on the support of each $Q_{XY \mid Z}(\, \cdot \mid z)$ must be identical. But this in turn implies that $P_{Y \mid X}$ is in fact a Latin rectangle, with potential symbol repetition.

We note that if we think of the construction problem as mapping each $(x,y)$ pair into a distribution over public signals, the solution does not depend on the marginal on $X$. Rather, it only depends on the conditional $P_{Y \mid X}$.

\section{Conclusion}
Our work asks when an agent can convey a message secretly to another, given the two possess side-information. We generalize Shannon's classic result by providing an easy-to-check substochasticity condition that is necessary and sufficient for secret communication via a public signal. The result clarifies the potential for and limits of secrecy in settings where only public channels are available. This is of fundamental importance to economic theorists, whose assumptions on private and public communication often drive their results.

\bibliographystyle{plainnat}
\bibliography{bibfile}

\end{document}